\documentclass[aps,prl,reprint,groupedaddress]{revtex4-1}
\usepackage{amsmath}
\usepackage{amssymb}
\usepackage{epsfig}
\usepackage{textcomp}
\usepackage{amsthm}

\newtheorem{theorem}{Theorem}

\newtheorem{lemma}[theorem]{Lemma}
\newtheorem{corollary}[theorem]{Corollary}

 % for Dirac bras
 % for Dirac kets
 % for Dirac brackets

% Symbol in Gothic font for event algebra
\font\german=eufm10 at 10pt \def\Buchstabe#1{{\hbox{\german #1}}}
\def\EA{\Buchstabe{A}} 

\newcommand{\cC}{\mathcal{C}}

\newcommand{\cM}{\mathcal{M}}

\newcommand{\cS}{\mathcal{S}}

\font\openface=msbm10 at10pt 
  \def\Reals {{\hbox{\openface R}}}

\begin{document}

\title {Bounding quantum contextuality with lack of third-order interference}

\author{Joe Henson\footnote{H.H. Wills Physics Laboratory, University of Bristol, Tyndall Avenue, Bristol, BS8 1TL, U.K.} }
  
\begin{abstract}
Recently many simple principles have been proposed that can explain
quantum limitations on possible sets of experimental probabilities in
nonlocality and contextuality experiments. However, few implications
between these principles are known.
Here it is shown that lack of irreducible third-order interference (a
generalisation of the idea that no probabilistic interference remains
unaccounted for once we have taken into account interference between
\textit{pairs} of slits in a $n$-sit experiment) implies the principle
known as the E principle or Consistent Exclusivity (that, if each pair
of a set of experimental outcomes are exclusive alternatives in some
measurement, then their probabilities are consistent with the
existence of a further measurement in which they are all exclusive).
This is a step towards a more unified understanding of quantum nonlocality and contextuality, which promises to allow derivations of important results from minimal, easily grasped assumptions. As one example, this result implies that lack of third order interference bounds violation of the CHSH-Bell inequality to $2.883$.
\end{abstract}
\maketitle
%=====================================================
It is of great interest to formulate simple principles obeyed by quantum mechanics (QM) from which otherwise mysterious or difficult results can be derived. Such principles can clarify the options when we consider what properties of quantum mechanics are most likely to persist in more developed physical theories. This question has relevance for quantum gravity, where many have considered going beyond standard quantum mechanics in the light of such issues as black hole evaporation and the problems of time (see \textit{e.g.} \cite{Sorkin:1997gi,Hartle:1998yg,Smolin:2006bw}). This has led to a convergence of interests from the study of quantum information and quantum gravity, in which both sides stand to gain new understanding.

In Bell-type nonlocality experiments, QM allows only a specific set of experimental probabilities \cite{Tsirelson:1980, Tsirelson:1987, Tsirelson:1993}, and it is interesting to look for an explanation of this in terms of simple principles \cite{Popescu:1994}.  The same question can be asked for broader classes of ``contextuality scenarios''. As such principles proliferate in the literature \cite{vanDam:2005,Craig:2006ny,Linden:2007,Pawlowski:2009,navascues:2009,Fritz:2012, Cabello:2012}, it becomes increasingly important to search for logical relations between them \cite{Navascues:2014}. The principle of \textit{Consistent Exclusivity} (CE) and the closely-related E-principle (called local orthogonality when applied to nonlocality, and also strongly related to orthomodularity and orthocoherence in earlier literature \cite{Foulis:1981}) is of particular interest \cite{Fritz:2012, Cabello:2012,Fritz:2013}. It is a trivial observation that, given a set of alternative outcomes for a given experiment, the sum of their probabilities is less than one.  CE requires that the probabilities of a set of experimental outcomes should also sum to less than one if each \textit{pair} of outcomes in the set is exclusive in some experiment, which is a stronger requirement when some outcomes in different measurements are considered to be physically identified (see below).  Similarly, it is a defining feature of QM that, in a multiple slit experiment, the probability of the particle reaching a particular region on the screen when two slits are opened may not be the same as the sum of the probabilities when each one of those slits is open.  However, in situations where such interference between \textit{pairs} of alternatives is ruled out, there is no further interference in QM.  Coming from the quantum gravity perspective, Sorkin has argued that a general form of this ``lack of (irreducible) third-order interference'' should be considered the most fundamental property of QM \cite{Sorkin:1994}. This principle has recently been directly tested in three-slit experiments \cite{Sinha:2009}, and has been applied in the context of generalised probabilistic theories as one of a number of postulates from which quantum mechanics can be reconstructed \cite{Ududec:2011,Barnum:2014ysa}, and, in a different framework, to imply some of the same restrictions as has CE \cite{Niestegge:2011, Niestegge:2013} \footnote{\label{f:other_formalisms}The results reported here differ substantially from Niestegge's, and the use of third order interference is quite different. Here, as far as the ban on PR boxes and Wright pentagon states is concerned, the only role of lack of third order interference is to imply CE.  Then the result follows from CE and a composibility assumption as in \cite{Fritz:2012, Cabello:2012}.  Niestegge implicitly assumes CE as part of his basic framework (see \textit{e.g.} page 3 of \cite{Niestegge:2012}, where the comments on the properties of states on ``orthospaces'' easily imply CE), and the same limitations on behaviours are derived without the composibility assumption, by adding a version of lack of third order interference.  The fact that these important properties of QM can be derived in two quite different starting points, one involving independent, separated systems and the other not, is surely significant for reconstructions of QM from simple principles.}. Here it is shown that, in a relatively simple framework for contextuality in the spirit of Sorkin's original idea, lack of third order interference implies CE. Results on a strengthening of this condition are to be found in \cite{Craig:2006ny,Dowker:2013}; here the weakest form of the principle is investigated in a more general setting.

Consider a hypothetical experiment in which different measurements can be chosen, which may be incompatible in the sense that carrying out one may affect the statistics of others. We will allow the identification of particular outcomes of different measurements (a concrete example being the identification of outcomes in QM experiments when they correspond to the same Hilbert space subspace).
Now consider a ``sample space'' $\Xi$, and let us identify every measurement with a partition $M$ of $\Xi$, and every ``fine-grained'' outcome of that measurement with a set $A \in M$. In this way an element of $\Xi$ specifies an outcome for every experiment. The set of all measurements will be called $\cM$. The term \textit{coarse-grained outcomes} for a measurement $M$ will refer to subsets of $M$ (including the empty set). For each $M\in \cM$ the set of all coarse-grained outcomes form a Boolean algebra $\EA_M$. The set of all coarse-grained outcomes across all measurements will be denoted $\cC := 
\bigcup_{M\in \cM} \,\EA_M$. Note that, as desired, the same outcome may appear in two different measurements. The space $\Xi$ together with the set $\cM$ of all measurements specifies a \textit{partition scenario} $\cS=\{\Xi, \cM\}$.  We say that outcomes $A$ and $B$ are \textit{exclusive} if they are disjoint and there exists a measurement $M \in \cM$ such that $A,B \in M$.

To understand this, consider the following ``marginal scenarios'' \cite{Abransky:2011}.  They involve a set of ``boxes'' with labels in $X=\{1,..,n\}$.  When a box is opened it can be found to be empty or full, denoted by the outcome bit $a \in\{0,1\}$.  Only certain subsets of the boxes $J \subset 2^X$ can be jointly opened.  A measurement picks out a subset $j \in J$ of the boxes to open and an outcome of that measurement corresponds to the assignment of a bit to each of those boxes, $s \in 2^j$.  To represent this as a partition scenario, we take the sample space to be all $n$-bit strings, $\Xi=2^X$, so that each string specifies an outcome for every box.   The outcome $A_s \subset \Xi$ comprises all of these strings that agree with the outcomes for the boxes actually measured, $j$: formally, $A_s = \{\gamma \in 2^X : \gamma|_j = s \}$  where $\gamma|_j$ is the restriction of the function $\gamma$ over $X$ to $j$.  The measurement corresponding to subset $j \subset X$ is represented by the partition $M_j=\{A_s\}_{s \in 2^j}$.  Other measurements, in which later choices of box are functions of earlier outcomes, can also be included \footnote{Indeed these ``correlated'' or ``branching'' measurements (see \cite{Dowker:2013} and appendix D of \cite{Fritz:2013}) must be added if the definition of exclusivity given above is to make intuitive sense; otherwise it is possible that two outcomes which contradict each other for a particular box may not appear as exclusive alternatives in one measurement.  Alternatively, exclusivity can be defined so that outcomes $A$ and $B$ are exclusive if there exists a course-grained outcomes $C$ and $D$ such that $A \subseteq C$, $B \subseteq D$, $C$ and $D$ are disjoint and there exists a measurement $M \in \cM$ such that $C,D \in M$.}. If the measurable subsets $j$ are such that exactly one of each subset in a partition of the boxes can be opened, then we have a ``Bell scenario'' (two pairs of boxes, such that only one box in each pair can be opened, is the CHSH scenario).  Another well-known example consists of $n$ boxes such that only pairs labelled $\{i,i+1\}$ for all $i$ and $\{n,1\}$ can be jointly measured.  ``Specker's parable'' concerns the $n=3$ case \cite{Specker:1960}, while for $n=5$ there is a set of outcomes $\{A_i\}$ with $i=1...5$ such that $\{A_i,A_{i+1}\}$ for $i=1...4$ and $\{A_5,A_1\}$ are the only exclusive pairs, known as a ``Wright pentagon'' \cite{Wright:1978}.

Given a partition scenario, a \textit{probability function} $P(\cdot)$ represents a set of experimental results.  Its domain is the set of all outcomes $\cC$, but the function $P$ is only required to be a probability measure when restricted to the outcomes $\EA_M$ for a given measurement $M$; thus the only restriction on experimental probabilities is that identified outcomes have the same probability (``consistency'') \footnote{Other frameworks inspired by the Kochen-Specker theorem can be defined, for example the AFLS formalism \cite{Fritz:2013}.  These cases can be treated in the marginal scenario framework, and thus represented as partition scenarios, via a restriction on probability functions (see section 7 of \cite{Abransky:2011} and appendix D of \cite{Fritz:2013}).  Using this map between the frameworks, it is not difficult to see that CE for the underlying marginal scenario implies CE for the derived ALFS scenario, and thus the main result of this letter extends to the AFLS framework.}.

Turning to restrictions on the experimental probabilities, non-contextuality requires that there exists a \textit{joint probability distribution} $P_J$ on $\Xi$ such that $P_J(A)=P(A)$ $\forall A \in \cC$.  That is, the experimental probabilities can be derived from a probability distribution over the whole sample space. It is well-known that this principle is incompatible with QM.  Consistent Exclusivity (CE) \cite{Cabello:2012, Fritz:2012} can be seen as a weakening of this condition.  A probability function $P$ on a scenario $\cS$ obeys CE if, for all sets $S$ of fine-grained outcomes such that $A$ and $B$ are exclusive for all pairs $\{A,B\} \subset S$,
\begin{equation}
\label{e:ce}
\sum_{A \in S} P(A) \leq 1.
\end{equation}
This definition follows 7.1.1 of \cite{Fritz:2013}, by which CE accords with the E principle, ``the sum of the probabilities of any set of pairwise mutually exclusive events cannot be higher than 1'' \cite{Cabello:2014a}.

Non-contextuality can instead be weakened by replacing the joint probability measure with a generalised measure that, while agreeing with the experimental probabilities, allows interference, meaning violation of the Kolmogorov sum rule. This interference is not unrestricted in QM, however -- otherwise any probability function would be allowed.  A \textit{joint quantum measure} is a function $\mu:2^\Xi \rightarrow \Reals_{\ge 0}$ such that 
\begin{equation}
\label{e:q_dec}
\mu(A)=P(A) \quad \forall A \in \cC,
\end{equation}
and such that for any three disjoint sets $A \subset \Xi$, $B \subset \Xi$ and $C \subset \Xi$, 
\begin{gather} 
\mu(A) + \mu(B) + \mu(C) \nonumber\\ - \mu(A \cup B) - \mu(B \cup C) - \mu(C \cup A) \nonumber\\ + \mu(A \cup B \cup C) = 0. \label{e:q_sum_rule}
\end{gather}
Equation (\ref{e:q_dec}) ensures that the quantum measure $\mu$ reduces to the experimental probabilities $P$ when restricted to measurement outcomes.  This implies that $\mu$ obeys the Kolmogorov rule when restricted to the outcomes of one experiment: we have $P(A)+P(B)=P(A\cup B)$ for exclusive outcomes, which, substituting all three terms by eqn.~(\ref{e:q_dec}), gives $\mu(A)+\mu(B)=\mu(A\cup B)$.  Equation (\ref{e:q_sum_rule}) is known as the \textit{Sorkin sum rule} (or ``quantum sum rule'').  It is not hard to check that both CE and the existence of a joint quantum measure are implied if there is a standard quantum model for the probability function \cite{Fritz:2013,Dowker:2013}.

Given this definition, it might not be obvious why lack of third order interference is being used as a synonym for the existence of a joint quantum measure.  The following lemma clarifies this.
\begin{lemma}
Consider a probability function $P$ on a scenario $\cS$ that admits a joint quantum measure, and consider a partition $Q$ of a set $X \subset \Xi$. If $\mu(A)+\mu(B)=\mu(A \cup B)$ for all $A,B \in Q$ then 
\begin{equation}
\label{e:jointni}
\mu(X)=\sum_{A \in Q} \mu(A).
\end{equation}
\label{l:nohigher}
\end{lemma}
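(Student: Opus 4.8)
The plan is to induct on the number $n$ of blocks of $Q$ (finite, which is the relevant case), with the Sorkin sum rule~(\ref{e:q_sum_rule}) supplying the only nontrivial input. Write $Q=\{A_1,\dots,A_n\}$. For $n=1$ the assertion $\mu(X)=\mu(A_1)$ is immediate, and for $n=2$ it is precisely the hypothesis $\mu(A_1\cup A_2)=\mu(A_1)+\mu(A_2)$; these serve as the base of the induction.

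For the inductive step, fix $n\geq 3$, assume the lemma for all partitions into fewer than $n$ blocks, and merge the last two blocks: put $B:=A_{n-1}\cup A_n$, so that $Q'=\{A_1,\dots,A_{n-2},B\}$ is a partition of $X$ into $n-1$ blocks. The crucial point — and the only place~(\ref{e:q_sum_rule}) is used — is that $\mu$ remains pairwise additive on $Q'$. The only pairs needing checking are those containing $B$: for each $i\leq n-2$ the sets $A_i$, $A_{n-1}$, $A_n$ are pairwise disjoint, so applying~(\ref{e:q_sum_rule}) to this triple expresses $\mu(A_i\cup A_{n-1}\cup A_n)=\mu(A_i\cup B)$ as the alternating combination of the $\mu$-values of the three two-element unions and the three singletons; substituting the hypothesis into each two-element union collapses this to $\mu(A_i)+\mu(A_{n-1})+\mu(A_n)$, which equals $\mu(A_i)+\mu(B)$ after one further use of the hypothesis on $\{A_{n-1},A_n\}$.

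Once pairwise additivity of $\mu$ on $Q'$ is in hand, the induction hypothesis applied to $Q'$ gives $\mu(X)=\sum_{i=1}^{n-2}\mu(A_i)+\mu(B)$, and $\mu(B)=\mu(A_{n-1})+\mu(A_n)$ completes the induction. The main (essentially the only) obstacle is this merging step: one has to notice that the pairwise-additivity hypothesis is just strong enough to force each triple $\{A_i,A_{n-1},A_n\}$ to behave classically under~(\ref{e:q_sum_rule}), the rest being bookkeeping. It is worth remarking that the $n=3$ instance of the argument is nothing but~(\ref{e:q_sum_rule}) read off directly, which is exactly the sense in which vanishing of the third-order interference term already rules out all higher-order interference — the point the lemma is meant to make.
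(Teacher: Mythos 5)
Your proof is correct, and it shares the paper's overall skeleton --- induction on the number of blocks of $Q$, powered by the Sorkin sum rule --- but it organizes the inductive step differently. The paper fixes two blocks $A,B\in Q$, sets $Y=X\setminus(A\cup B)$, and applies the sum rule to the single triple $\{Y,A,B\}$, using the inductive hypothesis to evaluate the large unions $\mu(Y\cup B)$ and $\mu(Y\cup A)$; the rule is thus applied to a triple containing the big complement of two blocks. You never apply the sum rule to large composite sets: you use it only on triples of original blocks $\{A_i,A_{n-1},A_n\}$ to show that pairwise additivity survives merging $A_{n-1}$ and $A_n$ into a single block $B$, and then invoke the inductive hypothesis once, on the coarsened partition $Q'$ (whose hypotheses you have verified, so the strong induction applies cleanly). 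The paper's route is slightly more economical (one use of the sum rule per inductive step versus your $n-2$), while yours isolates a tidy structural fact --- pairwise additivity of $\mu$ on a partition is stable under coarse-graining --- and makes explicit that the whole lemma is the $n=3$ instance of the Sorkin rule plus bookkeeping, which is precisely the conceptual point the lemma is meant to convey.
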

\begin{proof}
For $|Q|=2$ the statement is trivially true. Assume that the lemma holds in all cases with $|Q| \leq n$ for some $n \geq 2$, and consider a partition $Q$ of a set $X$ for which $|Q|=n+1$, obeying the condition $\mu(A)+\mu(B)=\mu(A \cup B)$ for all $A,B \in Q$. Let $A,B \in Q$ be two events in the partition and let $Y=X \backslash (A \cup B)$. Applying the Sorkin sum rule (\ref{e:q_sum_rule}) to $\{Y,A,B\}$ yields
\begin{gather}
\nonumber \mu(Y) + \mu(A) + \mu(B) - \mu(Y \cup A) - \\ \mu(A \cup B) - \mu(Y \cup B) + \mu(X) = 0.
\end{gather}
The inductive hypothesis implies that $\mu(Y \cup B)=\mu(Y)+\mu(B)$, and by assumption $\mu(A \cup B) - \mu(A) = \mu(B)$, and so this implies
\begin{equation}
\mu(X) = \mu(Y \cup A) + \mu(B) = \sum_{C \in Q} \mu(C),
\end{equation}
where the inductive hypothesis has been applied to the set $Y \cup A$ in the last step.
\end{proof}
To sum this lemma up in a slogan, the existence of a joint quantum measure implies that  ``pairwise non-interference implies joint non-interference,'' where joint non-interference refers to eqn.~(\ref{e:jointni}).  This is highly suggestive of a connection to CE.  As explained under eqn.~(\ref{e:q_sum_rule}), if a pair of events is exclusive then it is non-interfering, and hence, given the existence of a joint quantum measure, pairwise exclusivity implies joint non-interference.  However, this does not suffice to show that the existence of a joint quantum measure implies CE; the quantum measure is not in general bounded above, and so some work remains to be done to derive eqn.(\ref{e:ce}).
\begin{theorem}
Consider a probability function $P$ on a scenario $\cS$. If $P$ admits a joint quantum measure then it obeys Consistent Exclusivity.
\label{t:main}
\end{theorem}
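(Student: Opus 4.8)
The plan is to reduce Theorem~\ref{t:main} to Lemma~\ref{l:nohigher} by exhibiting, for any pairwise-exclusive set $S=\{A_1,\dots,A_k\}$, a partition of the \emph{whole} sample space $\Xi$ to which the lemma applies, and then reading off eqn.~(\ref{e:ce}) from $\mu(\Xi)=1$ together with the non-negativity of $\mu$. Since the pairs in $S$ are exclusive the $A_i$ are mutually disjoint, so with $X:=\bigcup_i A_i$ and $Z:=\Xi\setminus X$ the collection $\{A_1,\dots,A_k,Z\}$ is a partition of $\Xi$; the case $k\le 1$ is immediate (a fine-grained outcome lies in some measurement), so I would take $k\ge 2$.

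First I would record the two elementary consequences of the Sorkin sum rule that are needed. One, already noted in the text, is that exclusive outcomes are non-interfering. The other is that the interference term $I_2(A,B):=\mu(A\cup B)-\mu(A)-\mu(B)$ of two disjoint sets is additive in each slot over disjoint unions, i.e. $I_2(A,\bigsqcup_l B_l)=\sum_l I_2(A,B_l)$ whenever the $B_l$ are pairwise disjoint and disjoint from $A$: this is just the $C=B_1\cup B_2$ rearrangement of eqn.~(\ref{e:q_sum_rule}) followed by an induction. Pairwise non-interference of the $A_i$ then already gives $\mu(X)=\sum_i P(A_i)$ via Lemma~\ref{l:nohigher}, so everything hinges on controlling $\mu(Z)$.

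The main step, and the one I expect to be the crux, is showing that every $A_i$ also fails to interfere with the residual set $Z$, so that $\{A_1,\dots,A_k,Z\}$ is pairwise non-interfering. Here I would use exclusivity of \emph{all} pairs: fixing $i$ and any $j\ne i$, exclusivity of $\{A_i,A_j\}$ supplies a measurement $M$ with $A_i,A_j\in M$, so $W:=\Xi\setminus(A_i\cup A_j)$ lies in the Boolean algebra $\EA_M$, is disjoint from $A_i$, and has $A_i\cup W\in\EA_M$; additivity of the probability measure $P$ on $\EA_M$ and eqn.~(\ref{e:q_dec}) then force $I_2(A_i,W)=0$. Decomposing $W=\bigl(\bigsqcup_{l\ne i,j}A_l\bigr)\sqcup Z$ and using the additivity of $I_2(A_i,\cdot)$, the terms $I_2(A_i,A_l)$ all vanish (each $\{A_i,A_l\}\subset S$ is exclusive), leaving $I_2(A_i,Z)=0$.

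Finally, applying Lemma~\ref{l:nohigher} to the pairwise-non-interfering partition $\{A_1,\dots,A_k,Z\}$ of $\Xi$ gives $1=P(\Xi)=\mu(\Xi)=\sum_i P(A_i)+\mu(Z)$, and $\mu(Z)\ge 0$ yields eqn.~(\ref{e:ce}). The reason this needs an argument at all, rather than simply bounding $\mu(X)\le\mu(\Xi)$, is exactly the unboundedness of the quantum measure flagged before the theorem: monotonicity is not available, and it is the full pairwise-exclusivity of $S$ that lets $Z$ be peeled off without interference so that the single exact bound $\mu(\Xi)=1$ can be inherited.
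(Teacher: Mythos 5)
Your proof is correct and follows essentially the same route as the paper's: the same partition $\{A_1,\dots,A_k,Z\}$ of $\Xi$ (the paper's $Q=S\cup\{R\}$), the same reduction to Lemma~\ref{l:nohigher}, and the same final step of discarding $\mu(Z)\ge 0$ from $\sum_i\mu(A_i)+\mu(Z)=\mu(\Xi)=1$. The only point of divergence is the crux step showing each $A_i$ does not interfere with $Z$: you obtain it from the measurement witnessing exclusivity of a pair $\{A_i,A_j\}$ together with biadditivity of $I_2$ (itself an immediate consequence of eqn.~(\ref{e:q_sum_rule})), whereas the paper applies the Sorkin rule once to $\{X\setminus A_i,\,A_i,\,Z\}$, invokes Lemma~\ref{l:nohigher} for $\mu(X\setminus A_i)+\mu(A_i)=\mu(X)$, and uses that the complement of the fine-grained outcome $A_i$ within its own measurement is a coarse-grained outcome so that $\mu(A_i)+\mu(\Xi\setminus A_i)=1$ --- two equally elementary uses of eqn.~(\ref{e:q_dec}).
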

\begin{proof}
Assume that the probability function $P$ admits a joint quantum measure, and consider a set of fine-grained outcomes $S \subset \cC$ such that $A$ and $B$ are exclusive for all pairs $\{A,B\} \subset S$. Let us define the sets X= $\bigcup_{A \in S} A$ and $R = \Xi \backslash X$ , and also the set $Q= S \cup \{ R \}$, which is a partition of $\Xi$. Furthermore, using (\ref{e:q_dec}), we have that $\mu(A)+\mu(B)=\mu(A \cup B)$ for all $A,B \in S$. Now, for some $B \in S$, let us apply the Sorkin sum rule (\ref{e:q_sum_rule}) to the sets $Y= X \backslash B$, $B$ and $R$. We obtain
\begin{gather}
\nonumber \mu(Y) + \mu(B) + \mu(R) - \\ \mu(Y \cup B) - \mu(Y \cup R) - \mu(B \cup R) + \mu(\Xi) = 0.
\end{gather}
Applying lemma \ref{l:nohigher} gives $\mu(Y) + \mu(B) = \mu(Y \cup B)$, and we have $\mu(Y \cup R) + \mu(B)= \mu(\Xi)$ from (\ref{e:q_dec}) because $B$ (and thus its complement $Y \cup R$) is a measurement outcome, giving the result
\begin{equation}
\mu(B) + \mu(R) = \mu(B \cup R).
\end{equation}
Because this applies for all $B \in S$ we have $\mu(A)+\mu(B)=\mu(A \cup B)$ for all $A,B \in Q$. From this, lemma \ref{l:nohigher} gives $\sum_{A \in Q} \mu(A) = \mu(\Xi) =1 $. Subtracting $\mu(R)$, and remembering that the quantum measure is non-negative, we have that $\sum_{A \in S} \mu(A) \leq 1$. Using (\ref{e:q_dec}), this establishes that CE holds for the probability function $P$, proving the theorem.
\end{proof}
This allows a number of interesting results to be imported into quantum measure theory from the study of local orthogonality and CE, of which the following are instructive and representative but certainly not exhaustive (see \cite{Fritz:2013,Yan:2013,Amaral:2014,Sainz:2014} for more).
\begin{corollary}
The following are properties of all probability functions on partition scenarios that admit a joint quantum measure:
\begin{itemize}
\item[(i)] They imply the quantum bound, $\sqrt{5}$, on the maximum violation of the KCBS inequality for two independent copies of the Wright pentagon scenario;
\item[(ii)] for the CHSH scenario, the existence of two independent copies of this probability function with maximum violation of the CHSH inequality of more than 2.883 is banned; 
\item[(iii)] they allow no advantage over classical (non-contextual) probability functions for the Guess Your Neighbour's Input Game.
\end{itemize}
\end{corollary}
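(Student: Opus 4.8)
The plan is to derive all three items from Theorem~\ref{t:main}, which collapses the corollary to known consequences of Consistent Exclusivity: for each item there is already a matching bound in the CE / local-orthogonality literature, so the only real work is to cast the hypothesis ``$P$ admits a joint quantum measure'' into the form those results require. Since (i)--(iii) all ultimately concern \emph{independent copies}, the substantive step is a composibility argument that the joint-quantum-measure property is stable under forming such copies; everything else is citation.

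For the composibility step I would argue as follows. Given a probability function $P$ on $\cS$ with joint quantum measure $\mu$, represent $\mu$ on $2^\Xi$ by a decoherence functional $D$ --- a Hermitian, bi-additive form with $D(A,A)=\mu(A)$ --- which exists because on a finite sample space a non-negative function obeying the Sorkin sum rule~(\ref{e:q_sum_rule}) is exactly the diagonal of such a form. Pass to the scenario of $n$ independent copies, whose measurements are $n$-tuples $(M_1,\dots,M_n)$ partitioning $\Xi^{\times n}$ into product cells $A_1\times\cdots\times A_n$, and form the $n$-fold tensor product $D^{\otimes n}$, again Hermitian and bi-additive. A one-line polarisation identity (using only bi-additivity) shows that its diagonal $\mu_n$ again obeys~(\ref{e:q_sum_rule}); and on a product cell $\mu_n(A_1\times\cdots\times A_n)=\prod_i\mu(A_i)=\prod_i P(A_i)$, so $\mu_n$ reproduces the independent product probability function on the outcomes of the product scenario, i.e.~(\ref{e:q_dec}) holds there. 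Hence that product probability function admits $\mu_n$ as a joint quantum measure, and Theorem~\ref{t:main} gives CE for it: CE therefore holds for any number of independent copies.

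With CE available for one and for several copies, each item becomes a direct import. For (i): applied to two copies of the Wright pentagon, CE is known to sharpen the single-copy bound $5/2$ on $\sum_i P(A_i)$ to the quantum (Lov\'asz-number) value $\sqrt5$ of the KCBS inequality~\cite{Yan:2013,Amaral:2014} --- one copy alone only yields Wright's value $5/2$, which is why two are needed. For (ii): applied to two copies of the CHSH box, CE bounds the CHSH value by $2.883$, through the local-orthogonality argument of~\cite{Fritz:2012}. For (iii): CE forbids any probability function from beating the classical winning probability of the Guess Your Neighbour's Input game, as shown by the same line of reasoning~\cite{Fritz:2012,Sainz:2014}. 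In every case the content is furnished entirely by Theorem~\ref{t:main} together with the composibility step.

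I expect the composibility step to be the one genuine obstacle. The Sorkin sum rule survives the tensor product robustly, since the polarisation computation uses nothing beyond bi-additivity; the delicate point is \emph{non-negativity} of $\mu_n$ on subsets of $\Xi^{\times n}$ that are not product cells. This holds whenever $D$ can be taken \emph{strongly positive}, i.e.\ positive-semidefinite as a Hermitian form --- as it can in every standard quantum model --- because a tensor product of positive-semidefinite forms is positive-semidefinite. In the fully general case one instead takes the existence of independent copies as a composibility postulate, in the spirit of~\cite{Fritz:2012,Cabello:2012}. Everything downstream of this is a routine import from the CE literature and introduces no new ideas.
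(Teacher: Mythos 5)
Your citation-level imports essentially match the paper's proof, which is very short: the Wright pentagon, Bell scenarios, and independent copies thereof are all partition scenarios \cite{Dowker:2013}, so Theorem \ref{t:main} applies directly to any probability function on them that admits a joint quantum measure, and then (i)--(iii) follow by quoting the existing CE/local-orthogonality results (\cite{Cabello:2012} for (i), section 4.3 of \cite{Sainz:2014} for (ii), the first Methods proof of \cite{Fritz:2012} for (iii); note (iii) needs no copies at all, single-copy CE suffices). Where you diverge is in treating composability --- ``$P$ admits a joint quantum measure, hence so does $P\otimes P$'' --- as the substantive step. The corollary is worded so that this is not required: in (i) the probability function carrying the joint-quantum-measure hypothesis already lives on the two-copy pentagon scenario, and (ii) is a ban \emph{within the class}, i.e.\ it asserts that no probability function admitting a joint quantum measure consists of two independent copies of a CHSH behaviour violating the inequality by more than $2.883$. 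The paper is explicit (in its footnote comparing with Niestegge and in the closing discussion) that composability is an additional assumption whose derivation from the present principle is an open problem, which is exactly why \cite{Dowker:2013} restricts to strongly positive measures.

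Moreover, your proposed proof of that extra step does not close in general, as you yourself concede: the definition of a joint quantum measure here demands only non-negativity of $\mu$ together with (\ref{e:q_dec}) and (\ref{e:q_sum_rule}), the Hermitian bi-additive form representing it need not be positive semidefinite, and weak positivity is not preserved under tensoring --- non-negativity of the diagonal of $D^{\otimes n}$ on non-product subsets of the composite sample space can fail. Strong positivity repairs this but is a strictly stronger hypothesis than the corollary grants, and your fallback of postulating the existence of the independent copies simply reproduces the hypothesis already built into the statement, at which point the tensor construction is superfluous. In short: your argument is correct where it coincides with the paper (Theorem \ref{t:main} plus the literature imports), but the composability step you single out as the genuine obstacle is indeed a gap under your reading of the statement, and unnecessary under the paper's.
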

\begin{proof}
As noted above, the Wright pentagon can be constructed in a partition scenario. Thus \textit{(i)} can be proved by combining theorem \ref{t:main} with the arguments in \cite{Cabello:2012}. Bell scenarios and copies thereof are also partition scenarios \cite{Dowker:2013}, and so \textit{(ii)} and \textit{(iii)} can be proved by combining theorem \ref{t:main} with the argument in section 4.3 of \cite{Sainz:2014}, and the first proof in the Methods section of \cite{Fritz:2012}, respectively.
\end{proof}
By construing the principle more broadly (by assuming that certain contextuality scenarios are realisable, or that quantum probability functions must be in the physical set) CE can be made to imply both Tsirelson's bound for CHSH \cite{Cabello:2014a} and the quantum bound for all contextuality scenarios \cite{Amaral:2014} \footnote{This statement of the result is not universally accepted due to differences the framework employed by \cite{Amaral:2014} as opposed to that of \cite{Acin:2012}.  What is called the quantum set in the former framework corresponds, in the latter, only to their $\mathcal{Q}^1$ set, which is larger than their quantum set.}.

As noted above, other definitions of ``lack of third order interference'' have been made.  Finding out whether these versions of the principle are equivalent to the one given here is important for the goal of simplifying and clarifying the list of candidate principles.  Also, if the definition given above implies any of the others, then the results given above can be extended to these other formalisms.  This not ruled out for \cite{Ududec:2011}: while CE is shown to follow from two other assumptions unrelated to third order interference in this formalism, this does not mean that lack of third order interference alone fails to imply CE.  It will require more work to see if the definition of lack of third order interference given here is equivalent to that of \cite{Ududec:2011}, as the formalisms are quite different, and it is non-trivial to embed one formalism in the other.  It is possible that the definitions are only equivalent under some assumptions.  Similarly, it is not obvious that the definition of third order interference given in \cite{Ududec:2011} implies consistent exclusivity by a similar argument to that given above; indeed this implication may be false in general.  \footnote{Note that these considerations are not of interest for \cite{Niestegge:2011, Niestegge:2013} because, as already noted above, in that formalism CE is true for the most general class of models that is defined.}

Many other interesting issues remain open.  Firstly, it would be of great significance if the converse of the above theorem is also true.  However, the construction of a quantum measure from a probability function obeying CE, even if possible, is not a straightforward task.  Secondly, the stronger forms of joint quantum measure considered in \cite{Dowker:2013} have been justified by appealing to composability, and so it would be instructive to know if they can be derived from the above principle by adding some simple  assumptions.  Similarly, it has been asked whether local orthogonality can be strengthened by the addition of further strongly-motivated conditions.  In the light of the results above, work on either one of these questions can now inform the other. Hopefully, answering some of these questions will help to clarify what needs to be added to these principles in order to totally characterise quantum non-locality and contextuality.

\paragraph{Acknowledgements}  This work was made possible through the support of a grant from the John Templeton Foundation and was also supported by the EPSRC DIQIP grant and ERC AdG NLST. 

\bibliographystyle{../jhep}
\bibliography{refs0.2}
\end{document}